\def\@email#1#2{%
 \endgroup
 \patchcmd{\titleblock@produce}
  {\frontmatter@RRAPformat}
  {\frontmatter@RRAPformat{\produce@RRAP{*#1\href{mailto:#2}{#2}}}\frontmatter@RRAPformat}
  {}{}
}%
\definecolor{bleu_sombre}{rgb}{0,0,0.6} 
\definecolor{bs}{rgb}{0,0,0.6}  
\definecolor{rouge_sombre}{rgb}{0.8,0,0}
\definecolor{rs}{rgb}{0.8,0,0}
\definecolor{vert_sombre}{rgb}{0,0.6,0}
\definecolor{vs}{rgb}{0,0.6,0}
\theoremstyle{plain} 
\newtheorem{theorem}{Theorem}[section]
\newtheorem{lemma}[theorem]{Lemma}
\newtheorem{assumption}{Assumption}
\newtheorem{property}[theorem]{Property}
\theoremstyle{definition}
\newtheorem{remark}[theorem]{Remark}
\newtheorem{definition}[theorem]{Definition}
\newcommand{\iu}{{\rm i}}
\renewcommand{\leq}{\leqslant}	
\renewcommand{\geq}{\geqslant}
\def\ec{{\mathbb E}}% espérance
\def\ecL{{\mathbb E}} % espérance tronquée
\def\CC{{\mathbb C}}% complex numbers
\def\RR{{\mathbb R}}% real numbers
\def\NN{{\mathbb N}}% nonnegative integers
\def\ZZ{{\mathbb Z}}% integers
\def\Dom{{\mathcal D}}
\def\({\left(}
\def\){\right)}
\def\<{\left\langle}
\def\>{\right\rangle}
\newcommand{\dd}{\mathrm{d}}
\newcommand{\supp}{\mathrm{supp}}
\newcommand{\tr}{\mathrm{tr}}
\newcommand{\be}{\begin{equation}}
\newcommand{\ee}{\end{equation}}
\newcommand{\bea}{\begin{eqnarray}}
\newcommand{\eea}{\end{eqnarray}}
\newcommand{\bee}{\begin{eqnarray*}}
\newcommand{\eee}{\end{eqnarray*}}
\begin{document}
\title{Regularity of the density of states for random Dirac operators}
\author{Sylvain Zalczer}\email{szalczer@bcamath.org}
\affiliation{BCAM - Basque Center for Applied Mathematics\\ Mazarredo, 14. 48009 Bilbao, Spain}

\date{\today}
\begin{abstract}
We consider the random Dirac operators for which we have proved Anderson localization in~[J.-M. Barbaroux, H.D. Cornean, and S. Zalczer. Localization for Gapped Dirac
Hamiltonians with Random Perturbations: Application to Graphene Antidot
Lattices. \emph{Doc. Math.}, 24:65–93, 2019]. We use the Wegner estimate we have got in that paper to prove Lipschitz regularity of the density of states. We use a method based on the Helffer-Sjöstrand formula to get information on the density of states from the Wegner estimate we have proven in the first article.
\keywords{Dirac operator, random operators, integrated density of states, Helffer-Sjöstrand formula}
%\subclass{81Q10 \and 46N50}
\end{abstract}
\maketitle

This article may be downloaded for personal use only. Any other use requires prior permission of the author and AIP Publishing. This article appeared in \emph{J. Math. Phys.} 64, 062103 (2023) and may be found at 
\href{https://doi.org/10.1063/5.0078383}{
https://doi.org/10.1063/5.0078383
}.

\section{Introduction}
The density of states is a major concept in condensed matter physics. It aims at measuring the number of energy levels by unit of volume near a given energy.
Since operators used in solid-state physics typically have  absolutely continuous or dense pure point spectrum, it is not possible to count the eigenvalues. 
Instead, we have to use a finite volume operator, the spectrum of which  is discrete at least in some region of interest. More information about the general theory of the density of states for random Schrödinger operators can be found in \onlinecite{KM} and \onlinecite{veselic}. Note that, in these references, all results are about \emph{Schrödinger} operators, \textit{i.e.} with the energy given by a Laplacian. The density of states of random \emph{Dirac} operators have been studied in a recent paper of Prado, de Oliveira and de Oliveira~\cite{PdOdO}; they consider the case of a discrete one-dimensional operator while, in the present paper, we consider a continuous model in arbitrary dimension.

In this paper, we consider the random Dirac-like operators for which we proved Anderson localization in the gap of the unperturbed operator in~\onlinecite{BCZ}. One important physical meaning of these operators is the study of graphene. Indeed, the dispersion surface of the Hamiltonian describing an electron in graphene has a conical singularity, at a point which is called \emph{Dirac point}\cite{CGPNGG}. Models with Dirac operators are used by physicists in numerical simulations and give results in accordance with the ones given by discrete models~\cite{BTP,Pedersen2,Pedersen1}.
Mathematically speaking, Fefferman and Weinstein prove that the Dirac equation governs the effective dynamics of electrons in graphene, when the initial data are spectrally concentrated near the singularity~\cite{FW2}. Note that the density of states of graphene models is studied by physicists (see~\onlinecite{AO} and references therein).

We look again at the model of \onlinecite{BCZ}. We assume that a periodic potential creates an energy bandgap, so that the graphene is a semiconductor~\cite{DOW}. We perturb this gapped Hamiltonian with a random potential, which represents the effect of impurities in the sample. We use the Wegner estimate proven in~\onlinecite{BCZ} to get the Lipschitz regularity of the density of states. As we will explain in Remark~\ref{theremark}, several standard methods used for the density of states cannot be directly adapted to the case of Dirac operators. We use  the Helffer-Sj\"{o}strand formula to get the result, as it  has been used in different settings  by Klopp and Raikov\cite{KR}, Germinet and Klein~\cite{GKcompre}, or Hislop~\cite{His}. We prove that this method can be adapted to the case of random Dirac operators, which implies to prove an adapted Combes-Thomas inequality (Property~\ref{ct-trace}). Such a result is usually called ``Lipschitz regularity of the integrated density of states'', the integrated density of states being the cumulative distribution function of the density of states measure. Nevertheless,  our operator is not bounded from below and, conversely to Schrödinger operators with Gaussian potentials~\cite{FLM} \cite{HLMW} \cite{FHLM}, finite -volume operators have infinitely many eigenvalues below any real number. It is thus not possible to define the cumulative distribution function in the usual way. Hence, our results will be formulated in terms of the density of states measure.

In Section~\ref{sec:2}, we present the model and give the result. In Section~\ref{sec:3} is given the proof of the theorem. In Section~\ref{sec:4}, we prove that another way to define the density of states gives the same value. In Appendix~\ref{sec:app}, we give a proof of a Combes-Thomas estimate.

\section{Model and important properties\label{sec:2}}

% 
% Our goal is to prove Lipschitz regularity for the
% 
% \begin{equation}\label{defop}
%  H_{\omega} = H_0 + V_{\omega} .
% \end{equation}
% where
% \begin{equation}\label{hc1}
% H_0=S D_0 S+V_0
% \end{equation}
% where $D_0$ is a first-order elliptic operator with constant coefficients like in \eqref{def:D0}, and $S$ is a coefficient positive operator as in  \eqref{def:S}.

In this first part, we recall the model we have already used in \onlinecite{BCZ}. We use the terminology introduced by Klein and Koines in~\onlinecite{KleinKoines2004}.

\begin{definition}

Let $\{\sigma_i\}_{i=1}^d$ be a family of $n\times n$ Hermitian matrices where 
$n,d \geqslant 1$. 
We consider the following \emph{first-order linear operator} with constant coefficients: 
\begin{equation}\label{def:D0}                                     
 \sigma\cdot(-\iu\nabla) := \sum_{j=1}^d \sigma_j(-\iu\frac{\partial}{\partial x_j}),
\end{equation}
densely defined in $L^2(\RR^d, \CC^n)$. It is {\em elliptic} if there exists $C>0$ such that for all $p\in\RR^d$ and $q\in\CC^n$ we have 

\begin{align}\label{hc2}
\|(\sigma\cdot p) q\|_{\CC^n} \geqslant C\|p\|_{\RR^d}\;\|q\|_{\CC^n}.
\end{align}
\end{definition}
%%%%%%%%%%%%%%%%%%%%
%
%
%
%%%%%%%%%%%%%%%%%%%%
% If $E_0\in \RR$, the maps
%  $$
%  	\RR^d\ni p\mapsto g_{ij}(p)
%  	:= \left [(\sigma\cdot p-E_0-\iu)^{-1}\right ]_{ij}\in \CC,
%  	\quad 1 \leqslant i,j \leqslant n,
%  $$ 
% are well defined and due to \eqref{hc2} there exists a constant $C<\infty$ such that 
% \begin{align}\label{hc3}
% |g_{ij}(p)| \leqslant C \langle p\rangle^{-1},\quad 1 \leqslant i, j \leqslant n
% \end{align}
% where $\langle p\rangle := \sqrt{1+|p|^2}$ for  some norm $|\cdot|$ on $\RR^d$.
% 
% A direct consequence is that $\sigma\cdot (-\iu \nabla)$ is self-adjoint on the Sobolev space $H^1(\RR^d,\CC^n)$.  
\begin{definition}
 We say that an operator on $L^2(\RR^d, \CC^n)$ is a {\em   coefficient positive operator} if it is a bounded invertible operator given by the multiplication by an $n\times n$ Hermitian matrix-valued measurable function
 $S(x)$ such that there exist two positive constants $S_\pm$ such that:
 \begin{equation}\label{def:S}
  0<S_-I_n \leqslant S(x) \leqslant S_+I_n,
 \end{equation}
where $I_n$ is the $n\times n$ identity matrix. 
\end{definition}

We consider operators of the type
\begin{equation}\label{hc1}
H_0:=S D_0 S+V_0
\end{equation}
where $D_0$ is a first-order elliptic operator with constant coefficients like in \eqref{def:D0}, and $S$ is a coefficient positive operator as in  \eqref{def:S}.
We assume that the function $S\in W^{1,\infty}(\RR^d, \mathcal{H}_n(\CC))$, where $\mathcal{H}_n$ is the space of $n\times n$ Hermitian matrices, is  $\ZZ^d$-periodic. We denote 
 $$
 D_S := S D_0 S.
 $$
% Such operators appear in connection with wave propagation and are sometimes called  \emph{classical wave operators} (cf. \cite{KKS, KleinKoines2004}). We warn the reader that this name has nothing to do with the M\"oller wave operators of quantum scattering theory.
The potential $V_0$ is $\ZZ^d$-periodic and belongs to $ L^\infty(\mathbb{R}^d, \mathcal{H}_n)$.  

With the above definitions and assumptions, the operator $H_0$ is self-adjoint on $H^1(\RR^d, \CC^n)$.

\begin{assumption}[gap assumption]\label{assump-gap} The spectrum of $H_0$ contains a finite open gap,  which will be denoted $(B_-, B_+)$.
\end{assumption}
Examples of operators satisfying these conditions can be found in~\onlinecite[pp. 68--69]{BCZ}. They include the free Dirac operator with a positive mass and the periodic operators modelling graphene antidot lattices studied in~\onlinecite{barbaroux}.

For operators fulfilling Assumption~\ref{assump-gap}, we want to study the effect of random perturbations on the  spectral gap $(B_-, B_+)$.  

The random matrix-valued perturbation $V_{\omega}$ describing local defects is defined by
\begin{equation*}
 V_{\omega} :=
 \sum_{i\in\mathbb{Z}^d}\lambda_{i}(\omega)u(\cdot-\xi_{i}(\omega)-i) ,
\end{equation*}
for some $u$, $\lambda_i$ and $\xi_i$ satisfying Assumption~\ref{assump:2} below.
The total Hamiltonian is thus
\begin{equation}\label{defop}
 H_{\omega} := H_0 + V_{\omega} .
\end{equation}
 
%%%%%%%%%%%%%%%%%%%%%%%%%%%%%%%%%%%%%
\begin{assumption}\label{assump:2}
  \noindent (i) The real-valued random variables 
  $\{\lambda_{i}(\omega),i\in\mathbb{Z}^d\}$ are independent and identically distributed. Their common distribution is absolutely continuous with respect to Lebesgue measure, with a density $h$ such that $\|h\|_{L^\infty}<\infty$. 
%%%%%%%
% \color{red}
% \begin{verbatim}
% The assumption $\|h\|_\infty<\infty$ is to use the spectral averaging 
% by Combes-Hislop (not Combes-Hislop-Mourre)
% \end{verbatim}
% \color{black}
% %%%%%%%
We assume that $\supp (h) = :[-m,M] \neq \{0\}$ for some finite non-negative $m$ and $M$. %We take then $\Omega=\left(\supp (h)\right)^{\mathbb{Z}^2}$ equipped with the product probability measure.
  
 \noindent (ii) The variables $\{\xi_i(\omega),\, i\in\mathbb{Z}^d\}$ are independent and identically distributed, and they are also independent from the $\lambda_j$'s.  They take values in $B_R$ with $0<R<\frac{1}{2}$, where $B_R$ is the ball in $\RR^d$ with radius $R$ and centered at the origin.

 \noindent (iii) The single-site matrix potential $u$ is compactly supported with $\mathrm{supp}(u)\subset [-2, 2 ]^d$. In addition, $u$ is assumed to be continuous almost everywhere,  with $u\in L^\infty(\mathbb{R}^d, \mathcal{H}_n^+)$, where $\mathcal{H}_n^+$ is the space of $n\times n$ non-negative Hermitian matrices.

\end{assumption}

As stated in~\onlinecite[Remark~2.6(ii)]{BCZ}, $H_\omega$ is an ergodic family of operators and then has an almost surely deterministic spectrum.

Throughout this article we use the sup norm in $\RR^d$ :
$|x|:=\max\{|x_i|:i=1, \ldots ,d \}$. By $\Lambda_L$ we denote the open box of side $L>0$ centered at $0$:
  $$
  \Lambda_L:=\{y\in\RR^d;|y|<\frac{L}{2}\}.
  $$
% and by $\bar{\Lambda}_L(x)$ the closed box. We define the boundary belt as 
%   $$
%   \Upsilon_L(x)= \bar{\Lambda}_{L-1}(x)\backslash \Lambda_{L-3}(x).
%   $$
% 
% 
% We will write $\Lambda_l\sqsubset\Lambda_L(x)$ when a smaller box $\Lambda_l$ is completely surrounded by the belt $\Upsilon_L(x)$ of a bigger 
%  box $\Lambda_L(x)$.  More precisely, this means that if $x\in \ZZ^d$ and $L>l+3$ we have $\Lambda_l\subset\Lambda_{L-3}(x)$. 
% % % We are now able to define finite volume operators. Klein gives a very general definition in \cite{klein} but we give here only the one for random Dirac operators.
% % % This definition is the same as in \eqref{eq:1.4}
% % \begin{equation}
%  H_{\omega,x,L}=H_0+\sum_{i\in\tilde{\Lambda_L(x)}}\lambda_i(\omega)u_i(\cdot-\xi_i(\omega))
%  = H_0 + V_{\omega,x,L}
% \end{equation}
% where $\tilde{\Lambda_L(x)}=\Lambda_L(x)\cap\ZZ^d$. 

Given a box $\Lambda_L$, we define the localized operator
\begin{equation}\label{eq:1.4}
 H_{\omega,L}:=H_0+\sum_{i\in\Lambda_L\cap\ZZ^d}\lambda_i(\omega)u_i(\cdot -\xi_i(\omega))
 = H_0 + V_{\omega,L},
\end{equation}where we denote $u_i:=u(\cdot-i)$. This operator is a self-adjoint unbounded operator on $L^2(\RR^d,\CC^n)$.

We can then define $R_{\omega,L}(z):=(H_{\omega,L}-z)^{-1}$ the resolvent of 
$H_{\omega,L}$ and $E_{\omega,L}(\cdot)$ its spectral projection.

The two most important properties used in the proof are the following.
The first one is the Wegner estimate proven in \onlinecite[Theorem~4.2]{BCZ}.
%In \cite{BCZ}, we have proved a Wegner estimate for such a random operator:
\begin{property}[Wegner estimate]\label{thm:wegner}
Suppose Assumptions~\ref{assump-gap} and ~\ref{assump:2} hold. 
For any compact subinterval $J$ of $(B_-,B_+)$, there exists a constant $C_J$ such that for all $a<b\in J$ and all $L>0$
 \begin{equation*}%\label{eq:wegner}
  \ecL \left ( \tr(E_{\omega,L} ([a,b])) \right )\leqslant C_{J}\; (b-a)\; L^d.
\end{equation*}
\end{property}

The second property is  similar to Lemma~4.6 of \onlinecite{BCZ}. Its proof is given in appendix.
Recall that, for a positive self-adjoint operator, \begin{equation}
\tr(A):=\sum_{n=0}^{+\infty}\langle e_i,Ae_i\rangle,                                                 \end{equation}
where $(e_i)_{i\in\NN}$ is any orthonormal basis. The value of the trace is independent of the orthonormal basis, see Theorem~VI.18 of Reed and Simon\cite{RS1}.
We say that an operator $A$ is trace class if \[\|A\|_1:=\tr(|A|)<\infty.\] We denote the ideal of trace class operators by $\mathcal{T}_1$.
\begin{property}[Combes-Thomas estimate]\label{ct-trace}
Fix $E_m$ and $Y>0$. %Let $a_0>0$.
For all $E$, $y$ such that $|E|\leq E_m$, $|y|\leq Y$ and any pair of  bounded functions $\chi_1$ and $\chi_2$ with $\|\chi_i\|_\infty\leqslant 1$ for $i=1,2$ and $\chi_1$ compactly supported, such that the distance between their supports is $a\geqslant10$, 
the operator $\chi_1 (H_\omega-E-\iu y)^{-1} \chi_2$ is trace class.

Furthermore, there exist two constants $D>0$ and $\alpha>0$ such that for all  $E$, $y$,  $\chi_1$, $\chi_2$ satisfying these hypotheses and all $\omega\in\Omega$, we have
  \begin{equation}\label{eq:ct-trace}
  \|\chi_1 (H_\omega-E- \iu y)^{-1}\chi_2\|_1\leqslant \frac{D}{|y|^{2d+1}}\;|{\rm supp}(\chi_1)|\;e^{-\alpha |y| a}.
  \end{equation}
\end{property}

Our goal being a result on the density of states, we have to define it.

For any nonnegative bounded measurable compactly supported function $\phi$ on the real line, $\omega\in\Omega$ and $L\in\NN$, we denote by $\chi_L$ the characteristic function of $\Lambda_L$ and define \[\nu_{\omega,L}(\phi):=\frac{1}{L^d}\tr\left(\chi_L\phi(H_\omega)\chi_L\right).\]

\begin{lemma}
For almost all $\omega$, all $L$ and $\phi$ satisfying the above conditions, $\nu_{\omega,L}(\phi)$ is finite. Moreover, we have that
 \[\lim_{L\to\infty} \nu_{\omega,L}(\phi)=\ec(\tr(\chi_1\phi(H_\omega)\chi_1)),\] where the convergence is both almost surely and  in $L^1(\Omega)$.
\end{lemma}
\begin{proof}
 We can write $\chi_L$ as $ \sum_{\gamma\in \ZZ^d\cap \Lambda_L} \chi_\gamma$ where $\chi_\gamma$ is the characteristic function of $\Lambda_1(\gamma)$, defined as $\{y\in\RR^d;|y-\gamma|<\frac{1}{2}\}$. 
 Then,
 \[\chi_L\phi(H_\omega)\chi_L=\sum_{\gamma,\gamma'\in \ZZ^d\cap\Lambda_L}\chi_{\gamma}\phi(H_\omega)\chi_{\gamma'}.\]

Let us prove that $\chi_\gamma\phi(H_{\omega})$ is trace class. 

Obviously, we have that for $\lambda>0$ \[\chi_\gamma\phi(H_{\omega})=\chi_\gamma(H_\omega+\iu\lambda)^{-2d}(H_\omega+\iu\lambda)^{2d}\phi(H_\omega).\]

Since $\phi$ is compactly supported, it is easy to see that for $\lambda>0$ the operator $(H_\omega+\iu\lambda)^{2d}\phi(H_\omega)$ is bounded.
 Let $T^{-1}$ be the multiplication by $\langle x\rangle^{-2d}$. We prove as in the proof of (SGEE) in \onlinecite[pp. 78-79]{BCZ} that for $\lambda$ large enough $T^{-1}(H_\omega+\iu\lambda)^{-2d}\in \mathcal{T}_1$ with a trace norm almost surely independent of $\omega$.
Hence, $\chi_\gamma\phi(H_\omega)\in\mathcal{T}_1$ and its trace norm is almost surely independent of $\omega$.

% We begin by considering $\tilde{\chi}\in\mathcal{C}^\infty_0(\RR)$.
% 
% 
% According to Helffer-Sjöstrand formula, there exists $\tilde{\tilde{\chi}}\in\mathcal{C}^\infty_0(\CC)$ whose trace on $\RR$ is $\tilde{\chi}$  such that
% $\forall z\in\CC, \left|\frac{\partial}{\partial\bar{z}}\tilde{\tilde{\chi}}(z)\right|\leq C|\Im (z)|^{2d+2}$. Moreover, we have 
% \begin {equation}
%  \tilde{\chi}(H_\omega)=\frac{1}{\pi}\int_\CC \bar{\partial}\tilde{\tilde{\chi}}(x+\iu y)(H_\omega-x-\iu y)^{-1}dxdy
% \end {equation}
% Then, 
%  \begin {equation}
% \|\chi_{0} \tilde{\chi}(H_\omega)\|_1\leq\frac{1}{\pi}\int_\CC |\bar{\partial}\tilde{\tilde{\chi}}(x+\iu y)|\|\chi_0(H_\omega-x-\iu y)^{-1}\|dxdy.
% \end {equation}
% According to
% 
% \begin{verbatim}
% A detailler ; ajouter une hypothese (minimale? dans L^1? C_0^\infty?) sur \phi
% \end{verbatim}
We can then use the cyclicity of the trace (\textit{cf.}~\onlinecite[Theorem~VI.25]{RS1}) to write 
 \[\nu_{\omega,L}(\phi)=\frac{1}{L^d}\sum_{\gamma\in \ZZ^d\cap \Lambda_L}\tr\left(\chi_\gamma\phi(H_\omega)\chi_\gamma\right).\]
 But ergodicity makes it possible to write 
 $\chi_\gamma\phi(H_\omega)\chi_\gamma=\chi_0 U_\gamma^*\phi(H_\omega)U_\gamma \chi_0=\chi_0\phi(H_{\tau_\gamma(\omega)})\chi_0$
where $U_\gamma$ is the translation operator associated with $\gamma$ on $L^2(\RR^d,\CC^n)$ and $\tau_\gamma$ the corresponding operator on $\Omega$. We have then:
\[\nu_{\omega,L}(\phi)=\frac{1}{L^d}\sum_{\gamma\in \ZZ^d\cap \Lambda_L}\tr\left(\chi_0\phi(H_{\tau_\gamma(\omega)})\chi_0\right).\]
 Since the family $\left(\tr\left(\chi_0\phi(H_{\tau_\gamma(\omega)})\chi_0\right)\right)_{\gamma\in\ZZ^d}$ is ergodic with respect to translations and because $\ec\left(|\tr\left(\chi_0\phi(H_{\tau_\gamma(\omega)})\chi_0\right)|\right)<\infty$, according to Birkhoff's ergodic theorem (\textit{cf.} for example~\onlinecite[ Theorem~1.14 and Corollary~1.14.1]{wa}), we get the result.
 \end{proof}

 We denote by $\nu(\phi)$ the almost sure limit of $\nu_{\omega,L}$ as $L$ tends to infinity, it can be defined for functions which are not nonnegative too.
 It is easy to see that $ \nu$ is a positive linear form on $ \mathcal{C}_c(\RR)$ and thus, through Riesz-Markov theorem, comes from a Borel measure. This measure will be called the \emph{density of states} and called $\nu$ too.

 Our result is then the following:
 \begin{theorem}[Lipschitz continuity]\label{thm:main}
For any compact subinterval $J\subset(B_-,B_+)$, there exists some constant $C_J>0$ such that for any $a,b\in J$, $a<b$, \[\nu([a,b])\leq C_J (b-a).\]
 \end{theorem}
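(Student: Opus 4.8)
The plan is to transport the Wegner bound of Property~\ref{thm:wegner}, which is a statement about the operators $H_{\omega,L}$, to the density of states $\nu$ built from $H_\omega$; the transport will be carried out with the Helffer--Sj\"ostrand formula and the Combes--Thomas bound of Property~\ref{ct-trace}. I would first reduce to a smooth function. Fix $J$ and choose a compact interval $J'\subset(B_-,B_+)$ with $J$ in its interior, and let $C_{J'}$ be its Wegner constant. For $\epsilon>0$ small, pick $\phi_\epsilon\in C^\infty_c(\RR)$ with $0\le\phi_\epsilon\le1$, $\phi_\epsilon\equiv1$ on $[a,b]$ and $\supp\phi_\epsilon\subset[a-\epsilon,b+\epsilon]\subset J'$. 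Because $\nu$ is a positive measure and $\phi_\epsilon\ge\mathbf 1_{[a,b]}$, one has $\nu([a,b])\le\nu(\phi_\epsilon)$, so it suffices to prove $\nu(\phi_\epsilon)\le C_{J'}(b-a+2\epsilon)$ and then let $\epsilon\to0$; the theorem follows with $C_J=C_{J'}$.

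The core of the argument is that $\nu(\phi_\epsilon)$ can be computed from the finite-box operators on a \emph{slightly larger} box. Fix a constant $c_0$, depending only on $\supp u$ and $R$, large enough that $W:=H_\omega-H_{\omega,L+c_0}=\sum_{i\in\ZZ^d\setminus\Lambda_{L+c_0}}\lambda_i u_i(\cdot-\xi_i)$ is supported at distance at least $15$ from $\Lambda_L$; note that $\|W\|_\infty$ is bounded by a deterministic constant, and that $\phi_\epsilon(H_{\omega,L+c_0})$ is finite rank since $H_{\omega,L+c_0}$ is a relatively compact perturbation of $H_0$ ($V_{\omega,L+c_0}$ being a compactly supported bounded multiplication operator) and hence has discrete spectrum in $(B_-,B_+)$. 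I claim that
\[
\frac1{L^d}\,\ec\Big(\big|\tr\big(\chi_L\,(\phi_\epsilon(H_\omega)-\phi_\epsilon(H_{\omega,L+c_0}))\,\chi_L\big)\big|\Big)\xrightarrow[L\to\infty]{}0 .
\]
To prove this I would write $\chi_L=\sum_{\bgamma\in\ZZ^d\cap\Lambda_L}\chi_\bgamma$; using that each $\chi_\bgamma\phi_\epsilon(H_\omega)$ is trace class (as in the lemma above) and that $\phi_\epsilon(H_{\omega,L+c_0})$ is finite rank, cyclicity of the trace reduces the estimate to bounding $\tr(\chi_\bgamma A\chi_\bgamma)$ for each $\bgamma$, where $A:=\phi_\epsilon(H_\omega)-\phi_\epsilon(H_{\omega,L+c_0})$. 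For fixed $\bgamma$, take an almost analytic extension $\widetilde\phi$ of $\phi_\epsilon$ with $\supp\widetilde\phi$ compact and $|\bar\partial\widetilde\phi(z)|\le C_N|\IM z|^{N}$ for $N=2d+3$; the Helffer--Sj\"ostrand formula together with the resolvent identity gives
\[
\chi_\bgamma A\chi_\bgamma=\frac1\pi\int \bar\partial\widetilde\phi(z)\;\chi_\bgamma\,(z-H_\omega)^{-1}\,W\,(z-H_{\omega,L+c_0})^{-1}\,\chi_\bgamma\;\dd x\,\dd y .
\]
Inserting a cutoff $\eta$ with $\|\eta\|_\infty\le1$, equal to $1$ on $\supp W$ and at distance $a_\bgamma\ge\max\{10,\tfrac L2-|\bgamma|\}$ from $\Lambda_1(\bgamma)$ (possible for every $\bgamma\in\Lambda_L$ precisely because of the shift $c_0$), I would bound the trace norm of the integrand by the product of $\|\chi_\bgamma(z-H_\omega)^{-1}\eta\|_1\le D\,|\IM z|^{-(2d+1)}e^{-\alpha|\IM z|\,a_\bgamma}$ (Property~\ref{ct-trace}, the cube having unit volume), $\|W\|$, and $\|(z-H_{\omega,L+c_0})^{-1}\|\le|\IM z|^{-1}$. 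The $|\IM z|^{N}$ vanishing of $\bar\partial\widetilde\phi$ then cancels the $|\IM z|^{-(2d+2)}$, leaving
\[
\big|\tr(\chi_\bgamma A\chi_\bgamma)\big|\le C\int |\IM z|\,e^{-\alpha|\IM z|\,a_\bgamma}\,\dd x\,\dd y\le \frac{C'}{a_\bgamma^{2}},
\]
and summing over $\bgamma\in\ZZ^d\cap\Lambda_L$ yields $O(L^{d-1})$, uniformly in $\omega$, which proves the claim.

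Granting the claim, the lemma above (whose convergence is also in $L^1(\Omega)$) gives
\[
\nu(\phi_\epsilon)=\lim_{L\to\infty}\frac1{L^d}\,\ec\big(\tr(\chi_L\phi_\epsilon(H_\omega)\chi_L)\big)=\lim_{L\to\infty}\frac1{L^d}\,\ec\big(\tr(\chi_L\phi_\epsilon(H_{\omega,L+c_0})\chi_L)\big).
\]
Since $0\le\phi_\epsilon\le1$ and $\chi_L$ is an orthogonal projection, $\tr(\chi_L\phi_\epsilon(H_{\omega,L+c_0})\chi_L)\le\tr\phi_\epsilon(H_{\omega,L+c_0})\le\tr E_{\omega,L+c_0}(\supp\phi_\epsilon)\le\tr E_{\omega,L+c_0}([a-\epsilon,b+\epsilon])$, so Property~\ref{thm:wegner} gives $\ec\big(\tr(\chi_L\phi_\epsilon(H_{\omega,L+c_0})\chi_L)\big)\le C_{J'}(b-a+2\epsilon)(L+c_0)^d$. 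Dividing by $L^d$ and letting $L\to\infty$ yields $\nu(\phi_\epsilon)\le C_{J'}(b-a+2\epsilon)$; letting $\epsilon\to0$ concludes the proof.

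I expect the main obstacle to be the claim of the second paragraph. The two delicate points there are: comparing $H_\omega$ with $H_{\omega,L+c_0}$ rather than with $H_{\omega,L}$, so that the single-site potentials lying just outside $\Lambda_L$ sit at distance of order $L$ from the bulk of the box and the Combes--Thomas decay actually beats the volume factor; and keeping precise track of which powers of $|\IM z|$ come from the Combes--Thomas bound and which from the almost analytic extension, so that the $z$-integral is finite and $\sum_{\bgamma\in\ZZ^d\cap\Lambda_L}a_\bgamma^{-2}=o(L^d)$. The other ingredients -- the reduction to a smooth $\phi_\epsilon$, the trace inequalities, the finite-rank property of $\phi_\epsilon(H_{\omega,L+c_0})$, and the Wegner step -- are routine.
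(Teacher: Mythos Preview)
Your proof is correct and follows the paper's overall strategy: use Helffer--Sj\"ostrand and the resolvent identity to compare $\phi(H_\omega)$ with $\phi(H_{\omega,L+c_0})$, control the difference via the Combes--Thomas trace bound, and then invoke the Wegner estimate. The two implementations of the key comparison lemma differ, however. The paper keeps $\chi_L$ intact, writes $\chi_L(H_\omega-z)^{-1}V^{\mathrm{ext}}_{\omega,L+15}$, and splits $V^{\mathrm{ext}}$ into a ``near'' part supported in the shell $\Lambda_{L+\sqrt L}\setminus\Lambda_{L+12}$ (handled by the volume factor $|\supp\chi_1|\lesssim L^{d-1/2}$ in Property~\ref{ct-trace}) and a ``far'' part outside $\Lambda_{L+\sqrt L}$ (handled by the exponential $e^{-\alpha|y|\sqrt L}$, whose $y$-integral contributes $L^{-1/2}$); both pieces are $O(L^{-1/2})$ after dividing by $L^d$. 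You instead decompose $\chi_L=\sum_\bgamma\chi_\bgamma$ and apply Property~\ref{ct-trace} cube by cube, getting $\|\chi_\bgamma A\chi_\bgamma\|_1\le C a_\bgamma^{-2}$ and then summing; this avoids the $\sqrt L$ trick and actually yields the sharper rate $O(L^{-1})$. Your approximation of $\mathbf 1_{[a,b]}$ from above by $\phi_\epsilon$ is also slightly cleaner than the paper's approximation from below (which literally gives only $\nu((a,b))\le C_J(b-a)$ and needs a further word to pass to the closed interval). Both routes are valid; yours is arguably more elementary, while the paper's keeps the Combes--Thomas application to a single call.
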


\section{Proof of the main result\label{sec:3}}
 Since our Wegner estimate concerns $H_{\omega,L}$, we begin by showing that we can use this operator instead of $H_\omega$ in the density of states.

\begin{lemma}\label{HS}
Let $\phi\in\mathcal{C}^\infty_0(\RR)$ with support in $(B_-,B_+)$, as defined in Assumption~\ref{assump-gap}.
Then, uniformly in $\omega$,
\begin{equation}
\lim_{L\to\infty} \frac{1}{L^d}\left(\tr(\chi_L\phi(H_\omega)\chi_L)-\tr(\chi_L\phi(H_{\omega,L+15})\chi_L)\right)=0.
\end{equation}
\end{lemma}
\begin{proof}

According to the Helffer-Sjöstrand formula (\textit{cf.} for example \onlinecite[Proposition~4.8]{FK14},  or~\onlinecite[Section~2.2]{Davies}), there exists $\tilde{\phi}\in\mathcal{C}^\infty_c(\CC)$ which is an analytic extension of $\phi$,  such that
$\exists C>0,\forall z\in\CC, \left|\frac{\partial}{\partial\bar{z}}\tilde{\phi}(z)\right|\leq C|\Im (z)|^{2d+2}$. Moreover, we have for any self-adjoint operator $A$
\begin {equation}
 \phi(A)=\frac{1}{\pi}\int_{\RR^2} \bar{\partial}\tilde{\phi}(x+\iu y)(A-x-\iu y)^{-1}\dd x\dd y.
\end {equation}

Let $L>0$. According to this formula and resolvent equation, we have that for all $\omega\in\Omega$
\begin{align}
 &\left|\frac{1}{L^d}\left(\tr(\chi_L\phi(H_\omega)\chi_L)-\tr(\chi_L\phi(H_{\omega,L+15})\chi_L)\right)\right|\\
 &\leq \frac{1}{\pi L^d}\int_{\RR^2} |\bar{\partial}\tilde{\phi}(x+\iu y)|\left\|\chi_L(H_\omega-x-\iu y)^{-1}V^{ext}_{\omega,L+15}(H_{\omega,L+15}-x-\iu y)^{-1}\chi_L\right\|_1\dd x\dd y&\\
 &\leq\frac{1}{\pi L^d}\int_{\RR^2} |\bar{\partial}\tilde{\phi}(x+\iu y)|\left\|\chi_L(H_\omega-x-\iu y)^{-1}V^{ext}_{\omega,L+15}\right\|_1\left\|(H_{\omega,L+15}-x-\iu y)^{-1}\chi_L\right\|\dd x\dd y
\end{align}
where we have denoted $V_{\omega,L+15}^{ext}:=V_\omega-V_{\omega,L+15}=V_{\omega,L+15}^{ext}(1-\chi_ {L+10}).$

The operator norm in the integral is bounded by $|y|^{-1}$. The first factor $  |\bar{\partial}\tilde{\phi}(x+\iu y)|$ is bounded by $C|y|^{2d+2}$. The product is hence bounded by some constant independent of $L$, $x$ and $y$ multiplied by $|y|^{2d+1}$.

Let us consider the trace-norm factor.
We have that 
\begin{align}
 \|\chi_L(H_\omega-x-\iu y)^{-1}V^{ext}_{\omega,L+15}\|_1\leq M_\infty\Big(&\left\|\chi_L(H_\omega-x-\iu y)^{-1}(\chi_{L+\sqrt{L}}-\chi_{L+10})\right\|_1\label{tr1}\\&+\left\|\chi_L(H_\omega-x-\iu y)^{-1}(1-\chi_{L+\sqrt{L}})\right\|_1\Big)\label{tr2}
\end{align}
where $M_\infty$ denotes the supremum of $|V_\omega|$.

Thanks to~\eqref{eq:ct-trace}, \eqref{tr1} is bounded by $$\frac{C}{|y|^{2d+1}}|\supp(\chi_{L+\sqrt{L}}-\chi_{L+12})|\leq \frac{C'}{|y|^{2d+1}}L^{d-1/2}$$ and~\eqref{tr2} by $$\frac{C}{|y|^{2d+1}}|\supp(\chi_{L})|e^{-\alpha |y|\sqrt{L}}= \frac{C}{|y|^{2d+1}}L^de^{-\alpha |y|\sqrt{L}}.$$
Since there exist a compact set $K$ and $M>0$ independent of $L$ such that $\supp(\tilde{\phi})\subset K \times [-M,M]$, we can bound
\[\int_K\int_{-M}^M C e^{-|y|\sqrt{L}}\dd y\dd x\]  by $C'\frac{1}{\sqrt{L}}(1-e^{-M\sqrt{L}})$ for some constant $C'$.
We have then that there exists $C$ such that for all $L>0$
\[\left|\frac{1}{L^d}\big(\tr(\chi_L\phi(H_\omega)\chi_L)-\tr(\chi_L\phi(H_{\omega,L})\chi_L)\big)\right|\leq \frac{C}{\sqrt{L}}\] which tends to 0 as $L$ tends to infinity.
\end{proof}

\begin{remark}\label{theremark}
 Several methods to make appear the finite volume operator in the calculation, described in \onlinecite{KM} and \onlinecite{veselic}, are specific to Schrödinger operators: one of them uses the Feynman-Kac formula and the other one Dirichlet-Neumann bracketing, which is not possible in our case because the Dirac operator is not self-adjoint neither with Dirichlet nor Neumann boundary conditions.  
 Another method, close to ours, is presented in~\onlinecite{kirsch} but works only for discrete models since it uses the fact that $(H_\omega-z)^{-1}\chi_L$ is trace-class for all $z\in\CC\backslash\RR$ and $L>0$, which is not the case on the continuum.
 
 Conversely, the method we present here, which had already been used for generalized Anderson Hamiltonians~\cite{GKcompre} or perturbed Landau Hamiltonians~\cite{KR}, works for a large class of operators.
 \end{remark}

\begin{proof}[Proof of Theorem~\ref{thm:main}]
Let $J$ be a compact subinterval of $(B_-,B_+)$. We want to prove that there exists some constant $C_J>0$ such that for all $a,b\in J$, $a<b$,
$$
 \nu(1_{[a,b]})\leq C_J(b-a) .
$$
Let $\phi$ be as in Lemma~\ref{HS} with $0\leqslant \phi\leq1_{[a,b]}$.
Then, for $L\in\NN$ and almost every $\omega$,
\begin{align*}
 %\tr \left(\chi_L\phi(H_{\omega,L}\chi_L)\right)\leq
 \tr \left(\chi_L\phi(H_{\omega,L+15})\chi_L\right)&
 \leq\tr \left(\chi_L E_{\omega,L+15}([a,b])\chi_L \right)\\
  &\leq\tr \left(E_{\omega,L+15}([a,b])\right).
\end{align*}

Then, the Wegner estimate (Property~\ref{thm:wegner}) gives us that there exists $C_J$ such that, for all $\phi$, $a,b\in J$ and $L$  large enough, we have
\begin{align*}\ecL \left( \tr\left(\chi_L\phi(H_{\omega,L+15})\chi_L \right) \right)&\leq\ecL \left ( \tr(E_{\omega,L+15} ([a,b])) \right )\leqslant C_{J}\; (b-a)\; (L+15)^d\\&\leqslant C'_{J}\; (b-a)\; L^d.\end{align*}

We have that 
\begin{align*}
 \nu(\phi)&=\lim_{L\to\infty}\ec\left(\frac{1}{L^d}\tr\left(\chi_L\phi(H_\omega)\chi_L\right)\right)\\
 &=\lim_{L\to\infty}\frac{1}{L^d}\ec\left(\tr\left(\chi_L\phi(H_{\omega,L+15})\chi_L\right)\right).\\
\end{align*}
 But we know that for all $L$
 \[\frac{1}{L^d}\ec\left(\tr\left(\chi_L\phi(H_{\omega,L+15})\chi_L\right)\right)\leq C_J (b-a)\]
 so $\nu(\phi)\leq C_J(b-a).$
 
 By taking the supremum on $0\leqslant \phi\leq1_{[a,b]}$ through monotone convergence, we get 
  the result on the density of states. 
\end{proof}

\section{Alternative definition\label{sec:4}}
Another way to define the density of states is to spatially cut off the operator before taking a function of it. Typically, if we are interested in the density of states of an operator $H$, we look at limits such as
\[\lim_{L\to\infty}\frac{1}{L^d}\tr(\phi(H_L))\]
where $H_L$ is the operator $H$ restricted to $\Lambda_L$ with good boundary conditions.

In our case,  we will denote by $H_{\omega,L}^{per}$  the operator $H_{\omega}$ restricted to the box $\Lambda_L$ with periodic boundary condition.

We begin by proving a technical lemma similar to the one of Birman and Solomyak given in \onlinecite[Theorem~4.1]{simon}. 
Recall that a self-adjoint operator is in the Schatten class $\mathcal{T}_p$ if its $p$th Schatten norm is finite, namely
\[\|A\|_p:=\left(\tr(|A|^p)\right)^{1/p}<\infty.\]

 \begin{lemma}\label{BS}
  Let $L>0$ and $\nabla_L^{per}$ the gradient with periodic boundary conditions on $\Lambda_L$. Let $H=f(x)g(-\iu\nabla_L^{per})$ on $L^2(\Lambda_L)$ with $f\in L^p(\Lambda_L)$ and $\sum_{n\in\ZZ^d}|g(\frac{2n\pi}{L})|^p<\infty$ for some $2\leq p<\infty$. Then, $H$ is in $\mathcal{T}_p$ and 
  \begin{equation}
   \|H\|_p\leq \frac{1}{L^{d/p}}\|f\|_p\left(\sum_{n\in\ZZ^d}|g(\frac{2n\pi}{L})|^p\right)^{1/p}.
  \end{equation}

 \end{lemma}
\begin{proof} Let us begin by the case $p=2$.
 For $\psi\in L^2(\Lambda_L)$, we recall that we can define for all $n\in\ZZ^d$ its Fourier coefficient by
 \[
  c_n(\psi):=\frac{1}{L^d}\int_{\Lambda_L}\psi(x)e^{-\frac{2\iu\pi n\cdot x}{L}}\dd x.
 \]

 Standard manipulations on Fourier series make it possible to prove that $H$ is an integral operator with kernel $\frac{1}{L^d}f(x)\check{g}(x-y)$ where $\check{g}$ is the function on $\Lambda_L$ of which the Fourier coefficients are $(g(\frac{2n\pi}{L}))_{n\in\ZZ^d}$. Then, according to Theorem~2.11 of \onlinecite{simon}, the operator is Hilbert-Schmidt with a norm equal to the $L^2$ norm of its kernel.
 
 Parseval identity gives the value of this norm.
 
 The general case comes from interpolation as in Theorem~4.1 of \onlinecite{simon}.
\end{proof}

\begin{lemma}\label{reduc}
Let $\phi$ be a bounded, measurable function with compact support. Uniformly in $\omega$, we have:
 \[\lim_{L\to\infty} \frac{1}{L^d}\tr(\phi(H_{\omega,L}^{per})-\chi_{L-10}\phi(H_{\omega,L}^{per})\chi_{L-10})=0.\]
\end{lemma}
\begin{proof}
 We can easily see that 
 \[\phi(H_{\omega,L}^{per})-\chi_{L-10}\phi(H_{\omega,L}^{per})\chi_{L-10}=\phi(H_{\omega,L}^{per})(\chi_L-\chi_{L-10})+(\chi_L-\chi_{L-10})\phi(H_{\omega,L}^{per})\chi_{L-10}.\]
 
 Moreover, $\phi(H_{\omega,L}^{per})(\chi_L-\chi_{L-10})=\phi(H_{\omega,L}^{per})(H_{\omega,L}^{per}-\iu)^{2d}(H_{\omega,L}^{per}-\iu)^{-2d}(\chi_L-\chi_{L-10})$. The factor $\phi(H_{\omega,L}^{per})(H_{\omega,L}^{per}-i)^{2d}$ is bounded with a bound depending only on $\supp(\phi)$ and $\|\phi\|_\infty$.
 
 According to Lemma~\ref{BS}, $(\sigma\cdot(-\iu\nabla_L^{per})-\iu)^{-1}=(\sigma\cdot(-\iu\nabla_L^{per})-\iu)^{-1}\chi_L$ is in $\mathcal{T}_{2d}$ with a $2d$-norm which is smaller than $\|\chi_L\|_{2d}\left(\frac{1}{L^d}\sum_{n\in\ZZ^d}|g(\frac{2\pi n}{L})|^{2d}\right)^{1/2d}$ where $g(p):=(\sigma\cdot p-\iu)^{-1}$. It is easy to see that $\|\chi_L\|_{2d}=\sqrt{L}$ and the second factor tends to $\|g\|_{2d}$ which is independent of $L$. Then, there exists a constant $C$ such that the product is smaller than $C\sqrt{L}$. Writing the resolvent equation as in Remark~2.6 (iii) of \onlinecite{BCZ}, we get the same result for $(H_{\omega,L}^{per}-\iu)^{-1}$. Then, by H\"older inequality,  $(H_{\omega,L}^{per}-\iu)^{-2d+1}$ is in $\mathcal{T}_{2d/(2d-1)}$ with norm $C L^{(2d-1)/2}$.
 
 Similarly, we prove that $(\chi_L-\chi_{L-10})(H_{\omega,L}^{per}-\iu)^{-1}\in\mathcal{T}_{2d}$ with a $2d$-norm $C\|\chi_L-\chi_{L-10}\|_{2d}=CL^\frac{d-1}{2d}$. By H\"older inequality, $(\chi_L-\chi_{L-10})(H_{\omega,L}^{per}-\iu)^{-2d}$ is trace-class and 
\[\|(\chi_L-\chi_{L-10})(H_{\omega,L}^{per}-\iu)^{-2d}\|_1\leq CL^{d-\frac{1}{2d}}. 
\]
\emph{A fortiori},  the operator $(\chi_L-\chi_{L-10})\phi(H_{\omega,L}^{per})\chi_{L-10}$ is trace-class too and its norm is bounded by $CL^{d-\frac{1}{2d}}$.
This completes the proof.
\end{proof}

For $L>0$, we will denote by $R_{\omega,L'}^{per}(E)$ the resolvent $(H_{\omega,L'}^{per}-E)^{-1}$. Furthermore, we define, for two operators $A$ and $B$, the commutator $[A;B]:=AB-BA$.
\begin{lemma}[Geometric resolvent equation]
 Let $L<L'\leq\infty$ and $\tilde{\chi}_L$ a smooth function with support in $\Lambda_L$. Then, for all $E\in\rho(H_{\omega,L'}^{per})\cap\rho(H_{\omega,L}^{per})$,
 \begin{equation}\label{GRE}
 \tilde{\chi}_LR_{\omega,L'}^{per}(E)=R_{\omega,L}^{per}(E)\tilde{\chi}_L+R_{\omega,L}^{per}(E)[H_{\omega,L'}^{per};\tilde{\chi_L}]R_{\omega,L'}^{per}(E).  
 \end{equation}\end{lemma}
\begin{proof}
 Let $\psi\in L^2(\Lambda_{L'})$. Then, $R_{\omega,L'}^{per}(E)\psi\in \Dom(H_{\omega,L'}^{per})$.
 Moreover, $\tilde{\chi}_LR_{\omega,L'}^{per}(E)\psi\in \Dom(H_{\omega,L'}^{per})$ and so
 \begin{equation}\label{GREinter}(H_{\omega,L'}^{per}-E)\tilde{\chi_L}R_{\omega,L'}^{per}(E)\psi=\tilde{\chi_L}\psi+[H_{\omega,L'}^{per};\tilde{\chi_L}]R_{\omega,L'}^{per}(E)\psi.\end{equation}
 We see that $\tilde{\chi_L}R_{\omega,L'}^{per}(E)\psi$ has support in $\Lambda_L$ and, as a function in $L^2(\Lambda_L)$, it is in $\Dom(H_{\omega,L}^{per})$. Since $[H_{\omega,L'}^{per};\tilde{\chi_L}]$ has support in $\Lambda_L$, we can project \eqref{GREinter} on $L^2(\Lambda_L)$ to get
\begin{equation}
 (H_{\omega,L}^{per}-E)\tilde{\chi_L}R_{\omega,L'}^{per}(E)\psi=\tilde{\chi_L}\psi+[H_{\omega,L'}^{per};\tilde{\chi_L}]R_{\omega,L'}^{per}(E)\psi.
\end{equation}
The desired result is found by multiplying by $R_{\omega,L}^{per}(E)$.
\end{proof}

\begin{lemma}%\label{HS}
Let $\phi\in\mathcal{C}^\infty_0(\RR)$ with support in $(B_-,B_+)$, as defined in Assumption~\ref{assump-gap}.
Then, uniformly in $\omega$,
\begin{equation}
\lim_{L\to\infty} \frac{1}{L^d}\left(\tr(\chi_L\phi(H_\omega)\chi_L)-\tr(\phi(H_{\omega,L+10}^{per}))\right)=0.
\end{equation}
\end{lemma}
\begin{proof}

We will use again the Helffer-Sjöstrand formula: there exists $\tilde{\phi}\in\mathcal{C}^\infty_0(\CC)$, its trace on $\RR$ being $\phi$,  such that
$\exists C>0,\forall z\in\CC, \left|\frac{\partial}{\partial\bar{z}}\tilde{\phi}(z)\right|\leq C|\Im (z)|^{2d+2}$. Moreover, we have for any self-adjoint operator $A$
\begin {equation}
 \phi(A)=\frac{1}{\pi}\int_{\RR^2} \bar{\partial}\tilde{\phi}(x+\iu y)(A-x-\iu y)^{-1}\dd x\dd y.
\end {equation}

Let $L>0$. Lemma~\ref{reduc} will enable us to replace $\phi(H_{\omega,L+10}^{per})$ by $\chi_{L}\phi(H_{L+10}^{per})\chi_{L}$. We introduce a function $\tilde{\chi}_{L+10}$ which is smooth and in support in $\Lambda_{L+10}$ and which satisfies $\tilde{\chi}_{L+10}\chi_L=\chi_L$ and $\|\nabla\tilde{\chi}_{L+10}\|_\infty\leq1$. According to Helffer-Sjöstrand formula and our last two lemmas, we have that for all $\omega\in\Omega$
\begin{align*}
 &\left|\frac{1}{L^d}\left(\tr(\chi_L\phi(H_\omega)\chi_L)-\tr(\chi_{L}\phi(H_{\omega,L+10}^{per})\chi_{L})\right)\right|\\
 &\leq \frac{1}{\pi L^d}\int_{\RR^2} |\bar{\partial}\tilde{\phi}(x+\iu y)|\left\|\chi_L(H_\omega-x-\iu y)^{-1}\chi_L-\chi_L(H_{\omega,L+10}-x-\iu y)^{-1}\chi_L\right\|_1\dd x\dd y&\\
 &\leq \frac{1}{\pi L^d}\int_{\RR^2} |\bar{\partial}\tilde{\phi}(x+\iu y)|\left\|\chi_L\tilde{\chi}_{L+10}(H_\omega-x-\iu y)^{-1}\chi_L-\chi_L(H_{\omega,L+10}^{per}-x-\iu y)^{-1}\chi_L\right\|_1\dd x\dd y&\\
 &\leq \frac{1}{\pi L^d}\int_{\RR^2} |\bar{\partial}\tilde{\phi}(x+\iu y)|\Big\|\chi_L(H_{\omega,L+10}^{per}-x-\iu y)^{-1}\tilde{\chi}_{L+10}\chi_L\\&\hspace{4cm}+\chi_{L}(H_{\omega,L+10}^{per}-x-\iu y)^{-1}[H_\omega;\tilde{\chi}_{L+10}](H_\omega-x-\iu y)^{-1}\chi_L\\&\hspace{4cm}-\chi_L(H_{\omega,L+10}^{per}-x-\iu y)^{-1}\chi_L\Big\|_1\dd x\dd y&\\
 &\leq \frac{1}{\pi L^d}\int_{\RR^2} |\bar{\partial}\tilde{\phi}(x+\iu y)|\left\|\chi_{L}(H_{\omega,L+10}^{per}-x-\iu y)^{-1}[H_\omega;\tilde{\chi}_{L+10}](H_\omega-x-\iu y)^{-1}\chi_L\right\|_1\dd x\dd y&\\
 &\leq \frac{1}{\pi L^d}\int_{\RR^2} |\bar{\partial}\tilde{\phi}(x+\iu y)|\left\|\chi_{L}(H_{\omega,L+10}^{per}-x-\iu y)^{-1}\right\|\left\|[H_\omega;\tilde{\chi}_{L+10}](H_\omega-x-\iu y)^{-1}\chi_L\right\|_1\dd x\dd y.&
\end{align*}
We go from the penultimate to the last line by observing that $\tilde{\chi}_{L+10}\chi_L=\chi_L$.

The first factor is bounded by $C|y|^{2d+2}$ and the operator norm by $1/|y|$.

Let us consider the trace-norm factor.
Since $[H_\omega;\chi_{L+10}]=\sigma\cdot(-\iu\nabla\tilde{\chi}_{L+10})$ is a function with supremum 1 and support in the belt $\Lambda_{L+10}\backslash \Lambda_{L+5}$, equation \eqref{eq:ct-trace} makes it possible to bound the trace norm of $[H_\omega;\tilde{\chi}_{L+10}](H_\omega-x-\iu y)^{-1}\chi_L$ by $\frac{C}{|y|^{2d+1}}L^{d-1}$ with $C$ independent of $L$ and $(x,y)$.
We have then that there exists $C$ such that for all $L>0$
\[\frac{1}{L^d}\left(\tr(\chi_L\phi(H_\omega)\chi_L)-\tr(\chi_L\phi(H_{\omega,L+10}^{per})\chi_L)\right)\leq \frac{C}{L}\] which tends to 0 as $L$ tends to infinity.
\end{proof}
 
 \begin{acknowledgments}
 The author thanks the CPT in the university of Toulon where most of this work has been realized. He is very grateful  to H.D. Cornean for having suggested to use the Helffer-Sjöstrand formula. This work is supported the Basque Government through the BERC 2022-2025 program and by the Ministry of Science and Innovation: PID2020-112948GB-I00 funded by MCIN/AEI/10.13039/501100011033 and by "ERDF A way of making Europe". 
 \end{acknowledgments}
 
 \section*{Conflict of Interest and Data Availability Statements}
 The author  has  no conflicts to disclose.
 
 Data sharing is not applicable to this article as no new data were created or
analyzed in this study.
  \appendix
 \section{Combes-Thomas estimate\label{sec:app}} 
\begin{proof}[Proof of Property~\ref{ct-trace}]
Since the proof is similar to the one with real energies in \onlinecite{BCZ}, we give only what changes.

The first step is to bound the $\mathcal{T}_ {2d}$ norm of the product of the resolvent and a sufficiently decaying function.
According to inequality \eqref{hc2}, we have that for $E$ and $y$ in the specified intervals, $p\in \RR^d$ and $q\in \CC^n$,
\begin{equation}
 \|(\sigma\cdot p-E-\iu y)q\|^2\geq \|(\sigma\cdot p-E)q\|^2+|y|^2\|q\|^2\geq (|y|^2+\max(C\|p\|-|E|,0)^2)\|q\|^2.
\end{equation}

Then,
\begin{align}
\left\|\int_{\RR^d}(\sigma\cdot p-E-\iu y)^{-2d}\dd p\right\|&\leq \int_{\RR^d}\frac{1}{\sqrt{|y|^2+\max(C\|p\|-|E|,0)^2}^{2d}}\dd p\\
&\leq \int_{\|p\|\leq 2E_m/C}\frac{1}{|y|^{2d}}\dd p+\int_{\|p\|\geq 2E_m/C}\frac{2^{2d}}{C\|p\|^{2d}}\dd p\\&\leq C'+\frac{C''}{|y|^{2d}}
\end{align}

Using  the result of Birman and Solomyak (\onlinecite[Theorem 4.1]{simon}) and resolvent identities as in  \onlinecite[Remark~2.6~(iii)]{BCZ}, we can prove that, for $\chi$ a compactly supported function on $\RR^d$, 
we have for all $\omega$, $E$, $y$:
\begin{equation}
 \|R_\omega(E+\iu y)\chi\|_ {2d}\leq  \frac{C}{|y|}\-\| \chi\|_ {L^{2d}}
\end{equation}

for some constant $C$ depending only on $E_m$, $Y$, $C$ and $d$.

%Moreover, we can write that
%\begin{equation}\label{BCapp}
% \displaystyle\| \chi_1 (H_\omega - E-\iu y)^{-1}\chi_2\| \leqslant
% \displaystyle\frac{2}{|y|} e^{\displaystyle- c|y|(a_2-a_1)}.
%\end{equation}

%Finally, we get that 
%\begin{equation}\label{eq:ct-trace}
%  \|\chi_1 (H_\omega-E - \iu y)^{-1}\chi_2\|_1\leqslant \frac{D}{|y|^{2d+1}}\;|{\rm supp}(\chi_1)|%\;e^{-\alpha ya}.
%  \end{equation}

 We now prove an estimate in operator norm. 
 Let $\epsilon>0$ and define $\langle x-x_0\rangle_\epsilon := \sqrt{\epsilon+|x-x_0|^2}$. As in \onlinecite[Lemma~B.1]{BCZ}, for $t>0$, we define on $\mathcal{C}^\infty_c (\mathbb{R}^d,\mathbb{C}^n)$ the (non self-adjoint) operator
 $$
 H_{t,\epsilon} := e^{-t\langle x-x_0\rangle_\epsilon }
 H e^{t\langle x-x_0\rangle_\epsilon } 
 =H-tS\sigma\cdot(\iu\nabla\langle x-x_0\rangle_\epsilon)S.
 $$ Then, for all $t,\epsilon>0$, $\psi$ with norm 1, $E$ and $y \in \RR$, we have
 \begin{align*}
  \| (H_{t,\epsilon}-E-\iu y)\psi \| \geqslant & 
  \left|\Im(\langle \psi,(H_{t,\epsilon}-E-\iu y)\psi\rangle )\right|\\
  =  &\left| \langle \psi,tS\sigma\cdot(\nabla\langle x-x_0\rangle_\epsilon)S+y)\psi\rangle\right| \\ 
  \geqslant & |y| - \left| \langle \psi,tS\sigma\cdot(\nabla\langle x-x_0\rangle_\epsilon)S)\psi\rangle \right|
  \end{align*}

Choosing $t$ so small that $\|tS\sigma\cdot(\nabla\langle x-x_0\rangle_\epsilon)S)\|< \frac{|y|}{2}$, we get that the  norm is higher than $\frac{|y|}{2}$.
Hence, by a proof similar to Lemma~B.1 of \onlinecite{BCZ}, we get
\begin{equation}\label{BCapp}
 \displaystyle\| \chi_1 (H_\omega - E-\iu y)^{-1}\chi_2\| \leqslant
 \displaystyle\frac{2}{|y|} e^{\displaystyle- c|y|(a_2-a_1)}.
\end{equation}

To get the estimate in trace norm, we only have to follow the proof of Lemma~4.6 of \onlinecite{BCZ} which gives
\begin{equation}
  \|\chi_1 (H_\omega-E- \iu y)^{-1}\chi_2\|_1\leqslant \frac{D}{|y|^{2d+1}}\;|{\rm supp}(\chi_1)|\;e^{-\alpha |y| a}.
  \end{equation}
 \end{proof}

\bibliography{graphene}

% \begin{verbatim}
% Montrer que la limite L to \infty existe aussi avec un sup sur \phi 
% (revient `a  controler $K$ uniformement en $\phi$). 
% \end{verbatim}
\end{document}